\newtheorem{theorem}{Theorem}
\newcommand{\Ket}[1]{\left \lvert #1\right \rangle}
\newcommand{\BraKet}[2]{\left \langle #1 \middle \vert #2\right \rangle}
\newcommand{\KetBra}[2]{\left \lvert #1 \middle \rangle \middle
    \langle #2 \right \rvert}
\newcommand{\Proj}[1]{\KetBra{#1}{#1}}
\newcommand{\Abs}[1]{\left\lvert#1\right\rvert}
\newcommand{\QProb}[2]{\Abs{\BraKet{#1}{#2}}^2}
\begin{document}

\title{Maximally epistemic interpretations of the quantum state and
  contextuality}

\author{M. S. Leifer}
\email{matt@mattleifer.info}
\homepage{http://mattleifer.info}
\noaffiliation

\author{O. J. E. Maroney}
\email{owen.maroney@philosophy.ox.ac.uk}
\affiliation{Faculty of Philosophy, University of Oxford, 10 Merton
  Street, Oxford, OX1 4JJ, UK}
\altaffiliation[Mailing address]{Wolfson College, Linton Road, Oxford,
OX2 6UD, UK}

\date{\today}

\begin{abstract}
  We examine the relationship between quantum contextuality (in both
  the standard Kochen-Specker sense and in the generalised sense
  proposed by Spekkens) and models of quantum theory in which the
  quantum state is maximally epistemic.  We find that preparation
  noncontextual models must be maximally epistemic, and these in turn
  must be Kochen-Specker noncontextual.  This implies that the
  Kochen-Specker theorem is sufficient to establish both the
  impossibility of maximally epistemic models and the impossibility of
  preparation noncontextual models.  The implication from preparation
  noncontextual to maximally epistemic then also yields a proof of
  Bell's theorem from an EPR-like argument.
\end{abstract}

\pacs{03.65.Ta, 03.65.Ud}

\maketitle

The nature of the quantum state has been debated since the early days
of quantum theory.  Is it a state of knowledge or information (an
\textit{epistemic} state), or is it a state of physical reality (an
\textit{ontic} state)?  One of the reasons for being interested in
this question is that many of the phenomena of quantum theory are
explained quite naturally in terms of the epistemic view of quantum
states \cite{Spekkens2007}.  For example, the fact that nonorthogonal
quantum states cannot be perfectly distinguished is puzzling if they
correspond to distinct states of reality.  However, on the epistemic
view, a quantum state is represented by a probability distribution
over ontic states, and nonorthogonal quantum states correspond to
overlapping probability distributions.  Indistinguishability is
explained by the fact that preparations of the two quantum states
would sometimes result in the same ontic state, and in those cases
there would be nothing existing in reality that could distinguish the
two.

Several theorems have recently been proved showing that the quantum
state must be an ontic state
\cite{Pusey2012,Colbeck2012,*Hall2011,*Hardy2012,*Schlosshauer2012}.
Most of these have been proved within the ontological models framework
\cite{Harrigan2010}, which generalizes the hidden variable approach
used to prove earlier no-go results, such as Bell's theorem
\cite{Bell1964} and the Bell-Kochen-Specker theorem \cite{Bell1966,
  Kochen1967}.  However, each of these new theorems rests on auxiliary
assumptions, of varying degrees of reasonableness.  For example, the
Pusey-Barrett-Rudolph theorem \cite{Pusey2012} assumes that the ontic
states of two systems prepared in a product state are statistically
independent, and the Colbeck-Renner result \cite{Colbeck2012} employs
a strong ``free choice'' assumption that rules out deterministic
theories a priori.  An explicit counterexample shows that these proofs
cannot be made to work without such auxiliary assumptions
\cite{Lewis2012}.

The requirement of ontic quantum states is perhaps the strongest
constraint on hidden variable theories that has been proved to date.
It immediately implies preparation contextuality (within the
generalised approach to contextuality of Spekkens
\cite{Spekkens2005}), a version of Bell's theorem, and that the ontic
state space must be infinite, with a number of parameters that
increases exponentially with Hilbert space dimension. See
\cite{Leifer2011} for a discussion of these implications.  However,
the auxiliary assumptions used in the proofs of the onticity of
quantum states carry over into these corollaries whereas the original
proofs of these results \cite{Spekkens2005, Bell1964, Hardy2004,
  *Montina2008} did not require them.  For this reason, it is
interesting to look for results addressing the ontic/epistemic
distinction that are weaker than completely ontic, but can be proved
without auxiliary assumptions, since such results may sit near the top
of a hierarchy of no-go theorems.

Recently, one of us introduced a stronger notion of what it means for
the quantum state to be epistemic and proved that it is incompatible
with the predictions of quantum theory without any auxiliary
assumptions \cite{Maroney2012}.  An ontological model is
\emph{maximally $\psi$-epistemic} if the quantum probability of
obtaining the outcome $\Ket{\phi}$ when measuring a system prepared in
the state $\Ket{\psi}$ is entirely accounted for by the overlap
between the corresponding probability distributions in the ontological
model.  This property is required if the epistemic explanation of the
indistinguishability of nonorthogonal states is to be strictly true.
It is satisfied by the $\psi$-epistemic model of two-dimensional
Hilbert spaces proposed by Kochen and Specker \cite{Kochen1967,
  Harrigan2010}, and its analog is satisfied by the epistemic toy
model of Spekkens \cite{Spekkens2007}.

In this letter, we explain how this stronger notion of quantum state
epistemicity relates to other no-go theorems, particularly the
traditional notion of noncontextuality used in proofs of the
Kochen-Specker theorem and Spekkens notion of preparation
contextuality.  Briefly, Kochen-Specker noncontextuality applies to
deterministic models, and says that if an outcome corresponding to
some projector is certain to occur in one measurement then outcomes
corresponding to the same projector in other measurements must also be
certain to occur.  Preparation noncontextuality says that preparation
procedures corresponding to the same density operator must be assigned
the same probability distribution.  Our results can
be summarized as:
\begin{multline}
  \text{Preparation noncontextual} \Rightarrow \text{Maximally }
  \psi\text{-epistemic} \\ \Rightarrow \text{Kochen-Specker noncontextual}.
\end{multline}
Both implications are strict, which we demonstrate with specific
examples of models that are Kochen-Specker noncontextual but not
maximally $\psi$-epistemic, and maximally $\psi$-epistemic but not
preparation noncontextual.  Since the no-go theorem for
maximally-epistemic models does not require auxiliary assumptions,
these implications provide a stronger proof of preparation
contextuality and Bell's theorem than those obtained from other no-go
theorems for $\psi$-epistemic models.

We are interested in ontological models that reproduce the quantum
predictions for a set of prepare-and-measure experiments.  The
experimenter can perform measurements of a set of orthonormal bases
$\mathcal{M} = \left \{ M_1,M_2, \ldots \right \}$ on the system.  Let
$\mathcal{P} = \cup_{M\in\mathcal{M}} M$ denote the set of quantum
states that occur in one or more of these bases.  Prior to the
measurement, the experimenter can prepare the system in any of the
states in $\mathcal{P}$.

An ontological model for $\mathcal{M}$ specifies a measure space
$(\Lambda, d\lambda)$ of ontic states.  Each state $\Ket{\psi} \in
\mathcal{P}$ is associated with a probability distribution
$\mu_{\psi}(\lambda)$ \footnote{More generally, $\Lambda$ is a
  measurable space and states are associated with probability measures
  $\nu_{\psi}$.  For ease of exposition, we have assumed that
  $\Lambda$ is equipped with a canonical measure $d \lambda$ with
  respect to which all the measures $\nu_{\psi}$ are absolutely
  continuous, so that we have well-defined densities $\mu_{\psi} =
  d\nu_{\psi}/d\lambda$.  This assumption is not strictly necessary.
  A more general measure theoretic treatment will appear in
  \cite{Leifer2013}.}  \footnote{More generally, a distribution is
  associated with the procedure for preparing a pure state rather than
  the state itself to allow for preparation contextuality.  Assuming
  preparation noncontextuality for pure states does no harm in the
  present context because our argument only establishes preparation
  contextuality for mixed states.} over $\Lambda$ and each measurement
$M \in \mathcal{M}$ is associated with a set of positive response
functions $\xi_M(\alpha|\lambda)$ that satisfy $\sum_{\Ket{\alpha} \in
  M} \xi_M(\alpha|\lambda) = 1$ for all $\lambda\in\Lambda$.  The
ontological model is required to reproduce the Born rule, which means
that $\forall \Ket{\psi}\in\mathcal{P}, M \in \mathcal{M},
\Ket{\alpha} \in M,$
\begin{equation}
  \label{eq:Reproduce}
  \int_{\Lambda} \xi_M(\alpha|\lambda) \mu_{\psi}(\lambda) d\lambda =
  \QProb{\alpha}{\psi}.
\end{equation}

For each state $\Ket{\psi} \in \mathcal{P}$, define $\Lambda_{\psi}
=\left \{ \lambda \middle | \mu_{\psi}(\lambda)>0 \right \}$.  We
assume that $\Lambda = \cup_{\Ket{\psi}\in\mathcal{P}}
\Lambda_{\psi}$, since otherwise there will be superfluous ontic
states that are never prepared.  Two important facts, of which we make
repeated use, are that: in order to reproduce $\QProb{\psi}{\psi}=1$
in Eq.~\eqref{eq:Reproduce}, for every $M$ that contains $\Ket{\psi}$
it must be the case that $\xi_M(\psi|\lambda) = 1$ almost everywhere
on $\Lambda_{\psi}$; and for all orthogonal $\Ket{\phi}\in M$, such
that $\QProb{\phi}{\psi}=0$, $\xi_M(\phi|\lambda) = 0$ almost
everywhere on $\Lambda_{\psi}$.  This implies that $\Lambda_{\psi}
\cap \Lambda_{\phi}$ is of measure zero for orthogonal $\Ket{\psi}$
and $\Ket{\phi}$.

A \emph{$\psi$-ontic} ontological model is one in which, for any pair
of \textit{nonorthogonal} quantum states $\Ket{\psi} \neq \Ket{\phi}$,
$\Lambda_{\psi}\cap\Lambda_{\phi}$ is of measure zero.  This means
that, if one knows the ontic state then the prepared quantum state can
be identified almost surely.  Conversely, if $\Lambda_{\psi} \cap
\Lambda_{\phi}$ has positive measure for some pair of states, then the
model is \emph{$\psi$-epistemic}.

An ontological model is \emph{maximally $\psi$-epistemic} if
$\int_{\Lambda_{\phi}} \mu_{\psi}(\lambda) d\lambda =
\QProb{\phi}{\psi}$ for every $\Ket{\psi}, \Ket{\phi} \in
\mathcal{P}$.  Since $\xi_M(\phi|\lambda) = 1$ almost everywhere on
$\Lambda_{\phi}$, then $\int_{\Lambda_{\phi}} \xi_M(\phi|\lambda)
\mu_{\psi}(\lambda) d\lambda = \QProb{\phi}{\psi}$. The probability of
obtaining the outcome $\Ket{\phi}$ when measuring a system prepared in
the state $\Ket{\psi}$ is entirely accounted for by the overlap
between $\mu_{\psi}$ and $\mu_{\phi}$.

The traditional notion of noncontextuality used in proofs of the
Kochen-Specker theorem is the combination of two conditions:

\begin{enumerate}
\item An ontological model is \emph{outcome deterministic} if
  $\xi_M(\alpha|\lambda) \in \{0,1\}$ almost everywhere on $\Lambda$,
  for all $M \in \mathcal{M}$, $\Ket{\alpha} \in M$.
\item An ontological model is \emph{measurement noncontextual} if,
  whenever $M,M'\in\mathcal{M}$ contain a common state $\Ket{\alpha}$,
  $\xi_M(\alpha|\lambda) = \xi_{M'}(\alpha|\lambda)$ almost everywhere
  on $\Lambda$.
\end{enumerate}

\begin{theorem}
  \label{thm:Measurement}
  If an ontological model of $\mathcal{M}$ is maximally $\psi$-epistemic
  then it is also outcome deterministic and measurement noncontextual.
\end{theorem}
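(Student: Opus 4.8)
The plan is to squeeze both conclusions out of one short computation comparing the maximally $\psi$-epistemic condition with the Born-rule constraint \eqref{eq:Reproduce}. Fix $M\in\mathcal{M}$ and an outcome $\Ket{\alpha}\in M$, and note $\Ket{\alpha}\in\mathcal{P}$. First I would invoke the fact noted above that, since $M$ contains $\Ket{\alpha}$, $\xi_M(\alpha|\lambda)=1$ almost everywhere on $\Lambda_\alpha$. Hence for any $\Ket{\psi}\in\mathcal{P}$ we have $\int_{\Lambda_\alpha}\xi_M(\alpha|\lambda)\mu_\psi(\lambda)\,d\lambda = \int_{\Lambda_\alpha}\mu_\psi(\lambda)\,d\lambda = \QProb{\alpha}{\psi}$, the last equality being exactly the maximally $\psi$-epistemic condition. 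On the other hand \eqref{eq:Reproduce} gives $\int_{\Lambda}\xi_M(\alpha|\lambda)\mu_\psi(\lambda)\,d\lambda = \QProb{\alpha}{\psi}$, so subtracting the two identities leaves $\int_{\Lambda\setminus\Lambda_\alpha}\xi_M(\alpha|\lambda)\mu_\psi(\lambda)\,d\lambda = 0$.

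Since $\xi_M(\alpha|\lambda)\ge 0$ and $\mu_\psi(\lambda)\ge 0$, this forces $\xi_M(\alpha|\lambda)\mu_\psi(\lambda)=0$ almost everywhere on $\Lambda\setminus\Lambda_\alpha$, and because $\mu_\psi>0$ on $\Lambda_\psi$ by definition, $\xi_M(\alpha|\lambda)=0$ almost everywhere on $\Lambda_\psi\setminus\Lambda_\alpha$. Running this for every $\Ket{\psi}\in\mathcal{P}$ and using the standing assumption $\Lambda=\cup_{\Ket{\psi}\in\mathcal{P}}\Lambda_\psi$, I would conclude $\xi_M(\alpha|\lambda)=0$ almost everywhere on $\Lambda\setminus\Lambda_\alpha$. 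Combined with $\xi_M(\alpha|\lambda)=1$ a.e.\ on $\Lambda_\alpha$ this says $\xi_M(\alpha|\lambda)\in\{0,1\}$ a.e.\ on $\Lambda$, and since $M$ and $\Ket{\alpha}$ were arbitrary, the model is outcome deterministic. For measurement noncontextuality, take any $M'\in\mathcal{M}$ that also contains $\Ket{\alpha}$ and apply the same conclusion to it: both $\xi_M(\alpha|\cdot)$ and $\xi_{M'}(\alpha|\cdot)$ equal $1$ a.e.\ on $\Lambda_\alpha$ and $0$ a.e.\ on $\Lambda\setminus\Lambda_\alpha$, so they agree almost everywhere on $\Lambda$.

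The only step I expect to require care is the passage from ``$\xi_M(\alpha|\lambda)=0$ a.e.\ on each $\Lambda_\psi\setminus\Lambda_\alpha$'' to ``$\xi_M(\alpha|\lambda)=0$ a.e.\ on $\Lambda\setminus\Lambda_\alpha$'', since an uncountable union of null sets need not be null. When $\mathcal{M}$ is a countable collection of bases of a finite-dimensional Hilbert space, $\mathcal{P}$ is countable and there is nothing to check; in general one should run the argument at the level of the measures $\nu_\psi$ (as in the more careful treatment alluded to in the footnotes) and extract a countable subfamily of the $\Lambda_\psi$ that already covers $\Lambda$ up to a null set. I would flag this explicitly rather than hide it. Everything else is a one-line manipulation of nonnegative integrands, so no genuine computational difficulty arises.
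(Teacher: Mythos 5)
Your proof is correct and follows essentially the same route as the paper's: compare the Born-rule integral over $\Lambda$ with the maximally $\psi$-epistemic integral over $\Lambda_{\alpha}$, conclude $\xi_M(\alpha|\lambda)=0$ almost everywhere off $\Lambda_{\alpha}$ for every $\Ket{\psi}$, and read off both determinism and noncontextuality. Your explicit flag about the uncountable union of null sets is a point the paper silently elides (deferring rigour to a follow-up reference), so that extra care is a welcome addition rather than a deviation.
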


\begin{proof}
  The proof closely parallels that of the ``quantum deficit theorem''
  \cite{Harrigan2007}.  As mentioned above, for any $\Ket{\phi} \in
  \mathcal{P}$, $\xi_M(\phi|\lambda) = 1$ almost everywhere on
  $\Lambda_{\phi}$ for every $M \in \mathcal{M}$ that contains
  $\Ket{\phi}$.  Hence, it is also equal to $1$ almost everywhere on
  $\Lambda_{\phi} \cap \Lambda_{\psi}$, since $\Lambda_{\psi}$ has
  positive measure.  In order to reproduce the Born rule, the
  ontological model must satisfy
  \begin{equation}
    \int_{\Lambda} \xi_M(\phi|\lambda) \mu(\lambda)_{\psi} d \lambda =
    \QProb{\phi}{\psi},
  \end{equation}
  but a maximally $\psi$-epistemic theory must also satisfy
  \begin{equation}
    \int_{\Lambda_{\phi}} \xi_M(\phi|\lambda) \mu_{\psi}(\lambda) d
    \lambda = \QProb{\phi}{\psi}.
  \end{equation}
  Given that these two equations must hold for all $\Ket{\psi} \in
  \mathcal{P}$, comparing them yields $\xi_M(\phi|\lambda) = 0$ almost
  everywhere on $\Lambda \backslash \Lambda_{\phi}$.  Thus, the model
  is outcome deterministic.  Since the same argument holds for every
  $M$ in which $\Ket{\phi}$ appears, the model is measurement
  noncontextual.
\end{proof}

The implication in this theorem is strict, i.e.\ there exist
Kochen-Specker noncontextual models that are not maximally
$\psi$-epistemic.  An example is provided by the Bell-Mermin model
\cite{Bell1966, Mermin1993}, in which $\mathcal{M}$ consists of all
orthonormal bases in a two-dimensional Hilbert space.  The ontic state
space of the model consists of the cartesian product of two copies of
the unit sphere $\Lambda = S_2 \times S_2$, and we denote the ontic
states as $\lambda = (\vec{\lambda_1},\vec{\lambda_2})$, where
$\vec{\lambda_j} \in S_2$.  For a state $\Ket{\psi}$, let $\vec{\psi}$
denote the corresponding Bloch vector.  The distribution associated
with $\Ket{\psi}$ in the ontological model is a product
$\mu_{\psi}(\lambda) = \mu_{\psi}(\vec{\lambda_1})
\mu_{\psi}(\vec{\lambda_2})$, where $\mu_{\psi}(\vec{\lambda_1}) =
\delta(\vec{\lambda_1} - \vec{\psi})$ is a point measure on
$\vec{\psi}$ \footnote{This does not satisfy the absolute continuity
  assumption, but a more technical version of
  theorem~\ref{thm:Measurement} holds without it \cite{Leifer2013}.}
and $\mu_{\psi}(\vec{\lambda_2}) = \frac{1}{4\pi}$ is the uniform
measure on $S_2$.  It is easy to see that this model is not maximally
$\psi$-epistemic because $\Lambda_{\psi} \cap \Lambda_{\phi} =
\emptyset$ for distinct $\Ket{\psi}$ and $\Ket{\phi}$ due to the
$\delta$-function term.  In fact the model is $\psi$-ontic.

The response functions of the model are
\begin{equation}
  \label{eq:BellResponse}
  \xi_M(\alpha|\lambda) = \Theta \left ( \vec{\alpha}\cdot \left (
      \vec{\lambda_1} + \vec{\lambda_2}
    \right ) \right ),
\end{equation}
where $\Theta$ is the Heaviside step function,
\begin{align}
  \Theta(x) & = 1, \qquad x > 0 \\
  & = 0, \qquad x \leq 0.
\end{align}
This model is outcome deterministic because $\Theta$ only takes values
$0$ and $1$, and it is measurement noncontextual because the right
hand side of Eq.~\eqref{eq:BellResponse} does not depend on $M$. It is
straightforward to check that the model reproduces the quantum
predictions.

In order to understand the connection between maximally
$\psi$-epistemic models and preparation contextuality, we need to
describe how (proper) mixtures are represented in ontological models.
Assume that, in addition to preparing the pure states in
$\mathcal{P}$, the experimenter can also prepare mixtures of them by
generating classical randomness (by flipping coins, rolling dice,
etc.) with probability distribution $p_j$ and then preparing a
different state $\Ket{\psi_j} \in \mathcal{P}$ depending on the
outcome, resulting in the density operator $\rho = \sum_j p_j
\Proj{\psi_j}$ \footnote{For present purposes, it would be sufficient
  to allow only 50/50 mixtures.}.  The classical randomness is assumed
to be independent of the ontic state of the quantum system, so that
the distribution over ontic states associated with preparing the
ensemble $\mathcal{E} = \{p_j, \Ket{\psi_j}\}$ is $\mu_{\mathcal{E}}
=\sum_j p_j \mu_{\psi_j}(\lambda)$.  An ontological model is
\emph{preparation noncontextual} if $\mu_{\mathcal{E}}(\lambda)$
depends only on the density operator $\rho$, and not on the specific
ensemble decomposition, $\mathcal{E} = \{p_j,\Ket{\psi_j}\}$, used to
prepare it.  Otherwise the model is \emph{preparation contextual}.

\begin{theorem}
  \label{thm:Preparation}
  Suppose an ontological model of $\mathcal{M}$ is not maximally
  $\psi$-epistemic, so that there exist states $\Ket{\psi}, \Ket{\phi}
  \in \mathcal{P}$ such that
  \begin{equation}
    \int_{\Lambda_{\phi}} \mu_{\psi}(\lambda) d \lambda <
    \QProb{\phi}{\psi}.
  \end{equation}
  Then, if $\mathcal{P}$ includes the states $\Ket{\psi^{\perp}},
  \Ket{\phi^{\perp}}$ that satisfy $\BraKet{\psi^{\perp}}{\psi} = 0$
  and $\BraKet{\phi^{\perp}}{\phi}=0$, and are in the subspace spanned
  by $\Ket{\psi}$ and $\Ket{\phi}$, then the model is also preparation
  contextual.
  
\end{theorem}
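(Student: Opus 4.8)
The plan is to contradict preparation noncontextuality by exhibiting two ensemble decompositions of a single density operator whose associated ontic distributions are forced to differ. The right density operator to use is (half) the projector onto the two-dimensional subspace $V=\mathrm{span}\{\Ket{\psi},\Ket{\phi}\}$: because $\Ket{\psi^{\perp}}$ and $\Ket{\phi^{\perp}}$ both lie in $V$, the operator $\Proj{\psi}+\Proj{\psi^{\perp}}$ equals the projector onto $V$, and so does $\Proj{\phi}+\Proj{\phi^{\perp}}$; hence the two ensembles $\mathcal{E}_1=\{\tfrac{1}{2},\Ket{\psi};\tfrac{1}{2},\Ket{\psi^{\perp}}\}$ and $\mathcal{E}_2=\{\tfrac{1}{2},\Ket{\phi};\tfrac{1}{2},\Ket{\phi^{\perp}}\}$ prepare the same density operator $\rho$. (One may assume $0<\QProb{\phi}{\psi}<1$, since otherwise the hypothesis of the theorem is vacuous, and then $V$ is genuinely two-dimensional and $\Ket{\psi^{\perp}},\Ket{\phi^{\perp}}$ are well defined up to phase.)

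First I would assume, for contradiction, that the model is preparation noncontextual, so that $\mu_{\mathcal{E}_1}=\mu_{\mathcal{E}_2}$ almost everywhere:
\begin{equation}
  \mu_{\psi}(\lambda)+\mu_{\psi^{\perp}}(\lambda)=\mu_{\phi}(\lambda)+\mu_{\phi^{\perp}}(\lambda)\qquad\text{a.e.}
\end{equation}
Then I would integrate both sides over $\Lambda_{\phi}$. On the right, $\int_{\Lambda_{\phi}}\mu_{\phi}(\lambda)\,d\lambda=1$ since $\mu_{\phi}$ is a normalised density vanishing outside $\Lambda_{\phi}$, while $\int_{\Lambda_{\phi}}\mu_{\phi^{\perp}}(\lambda)\,d\lambda=0$ since $\Ket{\phi}\perp\Ket{\phi^{\perp}}$ makes $\Lambda_{\phi}\cap\Lambda_{\phi^{\perp}}$ null (one of the preliminary facts). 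This gives
\begin{equation}
  \int_{\Lambda_{\phi}}\mu_{\psi}(\lambda)\,d\lambda+\int_{\Lambda_{\phi}}\mu_{\psi^{\perp}}(\lambda)\,d\lambda=1.
\end{equation}

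Next I would bound the second term from above. Taking any $M\in\mathcal{M}$ with $\Ket{\phi}\in M$ (which exists since $\Ket{\phi}\in\mathcal{P}$), we have $\xi_M(\phi|\lambda)=1$ almost everywhere on $\Lambda_{\phi}$, so by the Born rule
\begin{equation}
  \int_{\Lambda_{\phi}}\mu_{\psi^{\perp}}(\lambda)\,d\lambda=\int_{\Lambda_{\phi}}\xi_M(\phi|\lambda)\mu_{\psi^{\perp}}(\lambda)\,d\lambda\leq\int_{\Lambda}\xi_M(\phi|\lambda)\mu_{\psi^{\perp}}(\lambda)\,d\lambda=\QProb{\phi}{\psi^{\perp}}.
\end{equation}
Since $\{\Ket{\psi},\Ket{\psi^{\perp}}\}$ is an orthonormal basis of $V$ and $\Ket{\phi}\in V$, we have $\QProb{\phi}{\psi}+\QProb{\phi}{\psi^{\perp}}=1$, so the last two displays combine to give $\int_{\Lambda_{\phi}}\mu_{\psi}(\lambda)\,d\lambda\geq 1-\QProb{\phi}{\psi^{\perp}}=\QProb{\phi}{\psi}$, contradicting the hypothesis $\int_{\Lambda_{\phi}}\mu_{\psi}(\lambda)\,d\lambda<\QProb{\phi}{\psi}$. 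Therefore the model must be preparation contextual.

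The argument is short, and I do not anticipate a serious obstacle; the one genuinely inventive step is recognising that the maximally mixed state on $\mathrm{span}\{\Ket{\psi},\Ket{\phi}\}$ admits two decompositions — one built from $\Ket{\psi}$ and its orthocomplement, one from $\Ket{\phi}$ and its orthocomplement — that together expose the shortfall in overlap. The remainder is the observation that restricting the Born-rule integral to the support $\Lambda_{\phi}$ turns $\int_{\Lambda}\xi_M(\phi|\lambda)\mu_{\psi^{\perp}}(\lambda)\,d\lambda$ into a quantity one can control, plus routine ``almost everywhere'' bookkeeping that the absolute-continuity assumption of the footnotes keeps harmless.
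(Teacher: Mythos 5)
Your proof is correct. It shares the paper's central construction --- the two ensembles $\{\tfrac{1}{2},\Ket{\psi};\tfrac{1}{2},\Ket{\psi^{\perp}}\}$ and $\{\tfrac{1}{2},\Ket{\phi};\tfrac{1}{2},\Ket{\phi^{\perp}}\}$ preparing the same density operator $\tfrac{1}{2}\Pi$ --- but the mechanism by which you derive the contradiction is genuinely different. The paper argues forward: from the strict inequality it extracts a witness set $\Omega$ disjoint from $\Lambda_{\phi}$ with $\mu_{\psi}(\Omega)>0$ and $\xi_M(\phi|\lambda)>0$ on $\Omega$, then notes that $\mu_{\phi^{\perp}}$ must vanish on any such set (since $\QProb{\phi}{\phi^{\perp}}=0$), so the two mixed-state distributions differ on $\Omega$. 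You instead argue by contraposition with a purely integral computation: assuming $\mu_{\psi}+\mu_{\psi^{\perp}}=\mu_{\phi}+\mu_{\phi^{\perp}}$ a.e., integrating over $\Lambda_{\phi}$, and combining $\int_{\Lambda_{\phi}}\mu_{\psi^{\perp}}\,d\lambda\leq\QProb{\phi}{\psi^{\perp}}$ with $\QProb{\phi}{\psi}+\QProb{\phi}{\psi^{\perp}}=1$ to force $\int_{\Lambda_{\phi}}\mu_{\psi}\,d\lambda\geq\QProb{\phi}{\psi}$. Each route has something to recommend it: yours avoids the mildly delicate measure-theoretic step of exhibiting the positive-measure set $\Omega$, and it delivers the corollary (preparation noncontextual $\Rightarrow$ maximally $\psi$-epistemic) in one stroke, since the reverse inequality $\int_{\Lambda_{\phi}}\mu_{\psi}\,d\lambda\leq\QProb{\phi}{\psi}$ always holds; the paper's version produces an explicit region of ontic states on which the two preparations differ, which is a more concrete picture of where the contextuality lives. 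Your treatment of the degenerate cases $\QProb{\phi}{\psi}\in\{0,1\}$ and of the support facts (normalisation of $\mu_{\phi}$ on $\Lambda_{\phi}$, nullity of $\Lambda_{\phi}\cap\Lambda_{\phi^{\perp}}$) is also correct.
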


\begin{proof}
  By Eq.~\eqref{eq:Reproduce}, for any $M$ containing $\Ket{\phi}$,
  \begin{align}
    \QProb{\phi}{\psi} & = \int_{\Lambda} \xi_M(\phi|\lambda)
    \mu_{\psi}(\lambda) d \lambda \\
    & \geq \int_{\Lambda_{\phi}} \xi_M(\phi|\lambda)\mu_{\psi}(\lambda)
    d \lambda
    & = \int_{\Lambda_{\phi}} \mu_{\psi}(\lambda) d\lambda,
  \end{align}
  where the last line follows because $\xi_M(\phi|\lambda) = 1$ almost
  everywhere on $\Lambda_{\phi}$.  By assumption, the inequality must
  be strict, so we have
  \begin{equation}
    \int_{\Lambda_{\phi}} \xi_M(\phi|\lambda)\mu_{\psi}(\lambda)
    d \lambda < \int_{\Lambda} \xi_M(\phi|\lambda)
    \mu_{\psi}(\lambda) d \lambda
  \end{equation}

  This means that there must be a set $\Omega$ of ontic states that is
  disjoint from $\Lambda_{\phi}$, is assigned nonzero probability by
  $\mu_{\psi}$, and is such that $\xi_M(\phi|\lambda) > 0$ for
  $\lambda \in \Omega$.  Now, consider the two mixed preparations:
  \begin{enumerate}
  \item Prepare $\Ket{\psi}$ with probability $1/2$ and
    $\Ket{\psi^{\perp}}$ with probability $1/2$.
  \item Prepare $\Ket{\phi}$ with probability $1/2$ and
    $\Ket{\phi^{\perp}}$ with probability $1/2$.
  \end{enumerate}
  The resulting density operators, $\rho_1$ and $\rho_2$, satisfy
  $\rho_1 = \rho_2 = \frac{1}{2} \Pi$, where $\Pi$ is the projector
  onto the subspace spanned by $\Ket{\psi}$ and $\Ket{\phi}$.  Let
  $\Lambda_1 = \Lambda_{\psi} \cup \Lambda_{\psi^{\perp}}$ and
  $\Lambda_2 = \Lambda_{\phi} \cup \Lambda_{\phi^{\perp}}$ be the
  supports of the corresponding distributions, $\mu_1 =
  \frac{1}{2}(\mu_{\psi} + \mu_{\psi^{\perp}})$ and $\mu_2 =
  \frac{1}{2}(\mu_{\phi} + \mu_{\phi^{\perp}})$, in the ontological
  model.  Now, $\Lambda_1 \cap \Omega$ is assigned nonzero probability
  by $\mu_1$, whereas $\mu_2$ assigns probability zero to $\Omega$.
  This is because $\Lambda_{\phi}$ is disjoint from $\Omega$ by
  definition and $\mu_{\phi^{\perp}}$ must assign zero probability any
  set of ontic states that assign nonzero probability to $\Ket{\phi}$
  in a measurement of any orthonormal basis that contains it.  Hence
  $\mu_1$ and $\mu_2$ must be distinct because their supports differ
  by a set of positive measure.
\end{proof}

A simple corollary of this theorem is that, whenever the states
$\Ket{\psi^{\perp}}$ and $\Ket{\phi^{\perp}}$ are in $\mathcal{P}$ for
every $\Ket{\psi}, \Ket{\phi} \in \mathcal{P}$, then any preparation
noncontextual ontological model is also maximally $\psi$-epistemic.
As in the case of Kochen-Specker contextuality, this implication is
strict, i.e.\ there are maximally $\psi$-epistemic models that are
preparation contextual.  An example is provided by the Kochen-Specker
model \cite{Kochen1967}, which again takes $\mathcal{M}$ to be all
orthonormal bases in a two-dimensional Hilbert space.  This time, the
ontic state space is just a single copy of the unit sphere $\Lambda =
S_2$ and the ontic states are unit vectors $\vec{\lambda} \in S_2$.
The probability distribution associated with a quantum state
$\Ket{\psi}$ is
\begin{equation}
  \mu_{\psi}(\vec{\lambda}) = \frac{1}{\pi} \Theta(\vec{\psi} \cdot
  \vec{\lambda}) \vec{\psi} \cdot \vec{\lambda}
\end{equation}
and the response function associated with a quantum
state $\Ket{\phi}$ is
\begin{equation}
  \xi_M(\phi|\lambda) = \Theta (\vec{\phi} \cdot \vec{\lambda}).
\end{equation}
It is straightforward to check that this model reproduces the quantum
predictions.

The model is maximally $\psi$-epistemic because $\Lambda_{\phi} =
\{\lambda | \Theta(\vec{\phi}\cdot\vec{\lambda}) = 1\}$ and thus
\begin{align}
  \QProb{\phi}{\psi} & = \int_{\Lambda}
  \xi_M(\phi|\lambda)\mu_{\psi}(\vec{\lambda}) d \lambda\\
  & = \int_{\Lambda} \Theta(\vec{\phi} \cdot
  \vec{\lambda}) \mu_{\psi}(\vec{\lambda}) d \lambda\\
  & = \int_{\Lambda_{\phi}} \mu_{\psi}(\vec{\lambda}) d \lambda.
\end{align}

On the other hand, the model is preparation contextual as can be seen
by considering the two preparations:
\begin{enumerate}
  \item Prepare $\Ket{+z}$ with probability $1/2$ and $\Ket{-z}$ with
    probability $1/2$.
  \item Prepare $\Ket{+x}$ with probability $1/2$ and $\Ket{-x}$ with
    probability $1/2$.
\end{enumerate}
Both preparations correspond to the maximally mixed state, but the
distributions $\frac{1}{2}(\mu_{+z} + \mu_{-z})$ and
$\frac{1}{2}(\mu_{+x} + \mu_{-x})$ are different.  In particular, both
$\mu_{+z}$ and $\mu_{-z}$ are zero on the equator whereas $\mu_{+x}$
and $\mu_{-x}$ are both nonzero here.

Theorem~\ref{thm:Measurement} implies that any proof of the
Kochen-Specker theorem is sufficient to establish that maximally
$\psi$-epistemic models are impossible for Hilbert spaces of dimension
greater than two.  Unlike the proof in \cite{Maroney2012}, however,
this does not establish a bound on how close to maximally
$\psi$-epistemic one can get.  Further, the Kochen-Specker theorem
allows a finite precision loophole \cite{Meyer1999, *Kent1999,
  *Clifton2001, *Barrett2004, *Hermens2011} that can be exploited to
allow noncontextual theories to get arbitrarily close to quantum
statistics, so it seems unlikely that this proof could be made robust
against experimental error.

Combining the two theorems also shows that the Kochen-Specker theorem
is enough to establish preparation contextuality.  Whilst it was known
that preparation noncontextuality implies outcome determinism for
models of quantum theory \cite{Spekkens2005}, it is a novel
implication that it also implies measurement noncontextuality.  This
demonstrates that the ontic/epistemic distinction is useful for
understanding the relationship between existing no-go theorems.

Finally, the type of preparation contextuality established by our
results can be used to prove Bell's theorem.  Briefly, if $\Ket{\psi}$
and $\Ket{\phi}$ are states such that
\begin{equation}
  \int_{\Lambda_{\phi}} \mu_{\psi}(\lambda) d \lambda <
  \QProb{\phi}{\psi}
\end{equation}
then we can demonstrate nonlocality using a maximally
entangled state $\frac{1}{\sqrt{2}} \left ( \Ket{\psi}_A\Ket{\psi}_B +
  \Ket{\psi^{\perp}}_A\Ket{\psi^{\perp}}_B \right )$.  Since the
reduced density matrix on Bob's system is
\begin{align}
  \rho & = \frac{1}{2} \left ( \Proj{\psi} +
  \Proj{\psi^{\perp}} \right ) \\
  & = \frac{1}{2} \left ( \Proj{\phi} +
  \Proj{\phi^{\perp}}\right ),
\end{align}
by the Schr{\"o}dinger-HJW theorem \cite{Schrodinger1936,
  *Hughston1993} there are two measurements that Alice can perform,
the first of which will collapse Bob's system to $\Ket{\psi}$ or
$\Ket{\psi^{\perp}}$ with 50/50 probabilities, and the second of which
will collapse it to $\Ket{\phi}$ or $\Ket{\phi^{\perp}}$ with 50/50
probabilities.  However, theorem~\ref{thm:Preparation} establishes
that these two ensembles cannot correspond to the same probability
distribution over ontic states.  Thus, the distribution on Bob's side
must depend on the choice of measurement that Alice makes, which
implies Bell nonlocality.  This argument generalizes the proof of
\cite{Harrigan2010}, which showed that local theories would have to be
$\psi$-epistemic.  In fact, we see that they would have to be
\emph{maximally} $\psi$-epistemic.  Filling in the formal details of
this argument can be done in a similar way to \cite{Harrigan2010}.

Whilst the impossibility of a maximally $\psi$-epistemic theory
clarifies what can be proved about contextuality based on the
ontic/epistemic distinction without auxiliary assumptions, it is not
sufficient to establish the constraints on the size of the ontic state
space that follow from having fully ontic quantum states
\cite{Hardy2004, *Montina2008}.  If one could prove, without auxiliary
assumptions, that the support of every distribution in an ontological
model must contain a set of states that are not shared by the
distribution corresponding to any other quantum state, then these
results would follow.  Whether this can be proved is an important open
question.

\begin{acknowledgments}
  We would like to thank Chris Timpson for helpful discussions.  OJEM
  is supported by the John Templeton Foundation.
\end{acknowledgments}

\bibliography{ContextualityPaper}

\begin{thebibliography}{30}%
\makeatletter
\providecommand \@ifxundefined [1]{%
 \@ifx{#1\undefined}
}%
\providecommand \@ifnum [1]{%
 \ifnum #1\expandafter \@firstoftwo
 \else \expandafter \@secondoftwo
 \fi
}%
\providecommand \@ifx [1]{%
 \ifx #1\expandafter \@firstoftwo
 \else \expandafter \@secondoftwo
 \fi
}%
\providecommand \natexlab [1]{#1}%
\providecommand \enquote  [1]{``#1''}%
\providecommand \bibnamefont  [1]{#1}%
\providecommand \bibfnamefont [1]{#1}%
\providecommand \citenamefont [1]{#1}%
\providecommand \href@noop [0]{\@secondoftwo}%
\providecommand \href [0]{\begingroup \@sanitize@url \@href}%
\providecommand \@href[1]{\@@startlink{#1}\@@href}%
\providecommand \@@href[1]{\endgroup#1\@@endlink}%
\providecommand \@sanitize@url [0]{\catcode `\\12\catcode `\$12\catcode
  `\&12\catcode `\#12\catcode `\^12\catcode `\_12\catcode `\%12\relax}%
\providecommand \@@startlink[1]{}%
\providecommand \@@endlink[0]{}%
\providecommand \url  [0]{\begingroup\@sanitize@url \@url }%
\providecommand \@url [1]{\endgroup\@href {#1}{\urlprefix }}%
\providecommand \urlprefix  [0]{URL }%
\providecommand \Eprint [0]{\href }%
\providecommand \doibase [0]{http://dx.doi.org/}%
\providecommand \selectlanguage [0]{\@gobble}%
\providecommand \bibinfo  [0]{\@secondoftwo}%
\providecommand \bibfield  [0]{\@secondoftwo}%
\providecommand \translation [1]{[#1]}%
\providecommand \BibitemOpen [0]{}%
\providecommand \bibitemStop [0]{}%
\providecommand \bibitemNoStop [0]{.\EOS\space}%
\providecommand \EOS [0]{\spacefactor3000\relax}%
\providecommand \BibitemShut  [1]{\csname bibitem#1\endcsname}%
\let\auto@bib@innerbib\@empty
\bibitem [{\citenamefont {Spekkens}(2007)}]{Spekkens2007}%
  \BibitemOpen
  \bibfield  {author} {\bibinfo {author} {\bibfnamefont {R.~W.}\ \bibnamefont
  {Spekkens}},\ }\href {\doibase 10.1103/PhysRevA.75.032110} {\bibfield
  {journal} {\bibinfo  {journal} {Phys. Rev. A}\ }\textbf {\bibinfo {volume}
  {75}},\ \bibinfo {pages} {032110} (\bibinfo {year} {2007})},\ \Eprint
  {http://arxiv.org/abs/quant-ph/0401052} {arXiv:quant-ph/0401052} \BibitemShut
  {NoStop}%
\bibitem [{\citenamefont {Pusey}\ \emph {et~al.}(2012)\citenamefont {Pusey},
  \citenamefont {Barrett},\ and\ \citenamefont {Rudolph}}]{Pusey2012}%
  \BibitemOpen
  \bibfield  {author} {\bibinfo {author} {\bibfnamefont {M.~F.}\ \bibnamefont
  {Pusey}}, \bibinfo {author} {\bibfnamefont {J.}~\bibnamefont {Barrett}}, \
  and\ \bibinfo {author} {\bibfnamefont {T.}~\bibnamefont {Rudolph}},\ }\href
  {\doibase 10.1038/nphys2309} {\bibfield  {journal} {\bibinfo  {journal}
  {Nature Physics}\ }\textbf {\bibinfo {volume} {8}},\ \bibinfo {pages} {476}
  (\bibinfo {year} {2012})},\ \Eprint {http://arxiv.org/abs/1111.3328}
  {arXiv:1111.3328} \BibitemShut {NoStop}%
\bibitem [{\citenamefont {Colbeck}\ and\ \citenamefont
  {Renner}(2012)}]{Colbeck2012}%
  \BibitemOpen
  \bibfield  {author} {\bibinfo {author} {\bibfnamefont {R.}~\bibnamefont
  {Colbeck}}\ and\ \bibinfo {author} {\bibfnamefont {R.}~\bibnamefont
  {Renner}},\ }\href {\doibase 10.1103/PhysRevLett.108.150402} {\bibfield
  {journal} {\bibinfo  {journal} {Phys. Rev. Lett.}\ }\textbf {\bibinfo
  {volume} {108}},\ \bibinfo {pages} {150402} (\bibinfo {year} {2012})},\
  \Eprint {http://arxiv.org/abs/1111.6597} {arXiv:1111.6597} \BibitemShut
  {NoStop}%
\bibitem [{\citenamefont {Hall}(2011)}]{Hall2011}%
  \BibitemOpen
  \bibfield  {author} {\bibinfo {author} {\bibfnamefont {M.~J.~W.}\
  \bibnamefont {Hall}},\ }\href@noop {} {\enquote {\bibinfo {title}
  {Generalisations of the recent pusey-barrett-rudolph theorem for statistical
  models of quantum phenomena},}\ } (\bibinfo {year} {2011}),\ \Eprint
  {http://arxiv.org/abs/1111.6304} {arXiv:1111.6304} \BibitemShut {NoStop}%
\bibitem [{\citenamefont {Hardy}(2012)}]{Hardy2012}%
  \BibitemOpen
  \bibfield  {author} {\bibinfo {author} {\bibfnamefont {L.}~\bibnamefont
  {Hardy}},\ }\href@noop {} {\enquote {\bibinfo {title} {Are quantum states
  real?}}\ } (\bibinfo {year} {2012}),\ \Eprint
  {http://arxiv.org/abs/1205.1439} {arXiv:1205.1439} \BibitemShut {NoStop}%
\bibitem [{\citenamefont {Schlosshauer}\ and\ \citenamefont
  {Fine}(2012)}]{Schlosshauer2012}%
  \BibitemOpen
  \bibfield  {author} {\bibinfo {author} {\bibfnamefont {M.}~\bibnamefont
  {Schlosshauer}}\ and\ \bibinfo {author} {\bibfnamefont {A.}~\bibnamefont
  {Fine}},\ }\href {\doibase 10.1103/PhysRevLett.108.260404} {\bibfield
  {journal} {\bibinfo  {journal} {Phys. Rev. Lett.}\ }\textbf {\bibinfo
  {volume} {108}},\ \bibinfo {pages} {260404} (\bibinfo {year} {2012})},\
  \Eprint {http://arxiv.org/abs/1203.4779} {arXiv:1203.4779} \BibitemShut
  {NoStop}%
\bibitem [{\citenamefont {Harrigan}\ and\ \citenamefont
  {Spekkens}(2010)}]{Harrigan2010}%
  \BibitemOpen
  \bibfield  {author} {\bibinfo {author} {\bibfnamefont {N.}~\bibnamefont
  {Harrigan}}\ and\ \bibinfo {author} {\bibfnamefont {R.~W.}\ \bibnamefont
  {Spekkens}},\ }\href {\doibase 10.1007/s10701-009-9347-0} {\bibfield
  {journal} {\bibinfo  {journal} {Found. Phys.}\ }\textbf {\bibinfo {volume}
  {40}},\ \bibinfo {pages} {125} (\bibinfo {year} {2010})},\ \Eprint
  {http://arxiv.org/abs/0706.2661} {arXiv:0706.2661} \BibitemShut {NoStop}%
\bibitem [{\citenamefont {Bell}(1964)}]{Bell1964}%
  \BibitemOpen
  \bibfield  {author} {\bibinfo {author} {\bibfnamefont {J.~S.}\ \bibnamefont
  {Bell}},\ }\href@noop {} {\bibfield  {journal} {\bibinfo  {journal}
  {Physics}\ }\textbf {\bibinfo {volume} {1}},\ \bibinfo {pages} {195}
  (\bibinfo {year} {1964})}\BibitemShut {NoStop}%
\bibitem [{\citenamefont {Bell}(1966)}]{Bell1966}%
  \BibitemOpen
  \bibfield  {author} {\bibinfo {author} {\bibfnamefont {J.~S.}\ \bibnamefont
  {Bell}},\ }\href@noop {} {\bibfield  {journal} {\bibinfo  {journal} {Rev.
  Mod. Phys.}\ }\textbf {\bibinfo {volume} {38}},\ \bibinfo {pages} {447}
  (\bibinfo {year} {1966})}\BibitemShut {NoStop}%
\bibitem [{\citenamefont {Kochen}\ and\ \citenamefont
  {Specker}(1967)}]{Kochen1967}%
  \BibitemOpen
  \bibfield  {author} {\bibinfo {author} {\bibfnamefont {S.}~\bibnamefont
  {Kochen}}\ and\ \bibinfo {author} {\bibfnamefont {E.~P.}\ \bibnamefont
  {Specker}},\ }\href@noop {} {\bibfield  {journal} {\bibinfo  {journal}
  {Journal of Mathematics and Mechanics}\ }\textbf {\bibinfo {volume} {17}},\
  \bibinfo {pages} {59} (\bibinfo {year} {1967})}\BibitemShut {NoStop}%
\bibitem [{\citenamefont {Lewis}\ \emph {et~al.}(2012)\citenamefont {Lewis},
  \citenamefont {Jennings}, \citenamefont {Barrett},\ and\ \citenamefont
  {Rudolph}}]{Lewis2012}%
  \BibitemOpen
  \bibfield  {author} {\bibinfo {author} {\bibfnamefont {P.~G.}\ \bibnamefont
  {Lewis}}, \bibinfo {author} {\bibfnamefont {D.}~\bibnamefont {Jennings}},
  \bibinfo {author} {\bibfnamefont {J.}~\bibnamefont {Barrett}}, \ and\
  \bibinfo {author} {\bibfnamefont {T.}~\bibnamefont {Rudolph}},\ }\href@noop
  {} {\enquote {\bibinfo {title} {The quantum state can be interpreted
  statistically},}\ } (\bibinfo {year} {2012}),\ \Eprint
  {http://arxiv.org/abs/1201.6554} {arXiv:1201.6554} \BibitemShut {NoStop}%
\bibitem [{\citenamefont {Spekkens}(2005)}]{Spekkens2005}%
  \BibitemOpen
  \bibfield  {author} {\bibinfo {author} {\bibfnamefont {R.~W.}\ \bibnamefont
  {Spekkens}},\ }\href {\doibase 10.1103/PhysRevA.71.052108} {\bibfield
  {journal} {\bibinfo  {journal} {Phys. Rev. A}\ }\textbf {\bibinfo {volume}
  {71}},\ \bibinfo {pages} {052108} (\bibinfo {year} {2005})},\ \Eprint
  {http://arxiv.org/abs/quant-ph/0406166} {arXiv:quant-ph/0406166} \BibitemShut
  {NoStop}%
\bibitem [{\citenamefont {Leifer}(2011)}]{Leifer2011}%
  \BibitemOpen
  \bibfield  {author} {\bibinfo {author} {\bibfnamefont {M.}~\bibnamefont
  {Leifer}},\ }\href
  {http://www.aps.org/units/gqi/newsletters/upload/vol6num3.pdf} {\bibfield
  {journal} {\bibinfo  {journal} {The Quantum Times}\ }\textbf {\bibinfo
  {volume} {6}},\ \bibinfo {pages} {1} (\bibinfo {year} {2011})},\ \bibinfo
  {note} {newsletter of the APS Topical Group on Quantum
  Information}\BibitemShut {NoStop}%
\bibitem [{\citenamefont {Hardy}(2004)}]{Hardy2004}%
  \BibitemOpen
  \bibfield  {author} {\bibinfo {author} {\bibfnamefont {L.}~\bibnamefont
  {Hardy}},\ }\href {\doibase 10.1016/j.shpsb.2003.12.001} {\bibfield
  {journal} {\bibinfo  {journal} {Stud. Hist. Phil. Mod. Phys.}\ }\textbf
  {\bibinfo {volume} {35}},\ \bibinfo {pages} {267} (\bibinfo {year}
  {2004})}\BibitemShut {NoStop}%
\bibitem [{\citenamefont {Montina}(2008)}]{Montina2008}%
  \BibitemOpen
  \bibfield  {author} {\bibinfo {author} {\bibfnamefont {A.}~\bibnamefont
  {Montina}},\ }\href {\doibase 10.1103/PhysRevA.77.022104} {\bibfield
  {journal} {\bibinfo  {journal} {Phys. Rev. A}\ }\textbf {\bibinfo {volume}
  {77}},\ \bibinfo {pages} {022104} (\bibinfo {year} {2008})},\ \Eprint
  {http://arxiv.org/abs/0711.4770} {arXiv:0711.4770} \BibitemShut {NoStop}%
\bibitem [{\citenamefont {Maroney}(2012)}]{Maroney2012}%
  \BibitemOpen
  \bibfield  {author} {\bibinfo {author} {\bibfnamefont {O.~J.~E.}\
  \bibnamefont {Maroney}},\ }\href@noop {} {\enquote {\bibinfo {title} {How
  statistical are quantum states?}}\ } (\bibinfo {year} {2012}),\ \Eprint
  {http://arxiv.org/abs/1207.6906} {arXiv:1207.6906} \BibitemShut {NoStop}%
\bibitem [{Note1()}]{Note1}%
  \BibitemOpen
  \bibinfo {note} {More generally, $\Lambda $ is a measurable space and states
  are associated with probability measures $\nu _{\psi }$. For ease of
  exposition, we have assumed that $\Lambda $ is equipped with a canonical
  measure $d \lambda $ with respect to which all the measures $\nu _{\psi }$
  are absolutely continuous, so that we have well-defined densities $\mu _{\psi
  } = d\nu _{\psi }/d\lambda $. This assumption is not strictly necessary. A
  more general measure theoretic treatment will appear in \cite
  {Leifer2013}.}\BibitemShut {Stop}%
\bibitem [{Note2()}]{Note2}%
  \BibitemOpen
  \bibinfo {note} {More generally, a distribution is associated with the
  procedure for preparing a pure state rather than the state itself to allow
  for preparation contextuality. Assuming preparation noncontextuality for pure
  states does no harm in the present context because our argument only
  establishes preparation contextuality for mixed states.}\BibitemShut {Stop}%
\bibitem [{\citenamefont {Harrigan}\ and\ \citenamefont
  {Rudolph}(2007)}]{Harrigan2007}%
  \BibitemOpen
  \bibfield  {author} {\bibinfo {author} {\bibfnamefont {N.}~\bibnamefont
  {Harrigan}}\ and\ \bibinfo {author} {\bibfnamefont {T.}~\bibnamefont
  {Rudolph}},\ }\href@noop {} {\enquote {\bibinfo {title} {Ontological models
  and the interpretation of contextuality},}\ } (\bibinfo {year} {2007}),\
  \Eprint {http://arxiv.org/abs/0709.4266} {arXiv:0709.4266} \BibitemShut
  {NoStop}%
\bibitem [{\citenamefont {Mermin}(1993)}]{Mermin1993}%
  \BibitemOpen
  \bibfield  {author} {\bibinfo {author} {\bibfnamefont {N.~D.}\ \bibnamefont
  {Mermin}},\ }\href@noop {} {\bibfield  {journal} {\bibinfo  {journal} {Rev.
  Mod. Phys.}\ }\textbf {\bibinfo {volume} {65}},\ \bibinfo {pages} {803}
  (\bibinfo {year} {1993})}\BibitemShut {NoStop}%
\bibitem [{Note3()}]{Note3}%
  \BibitemOpen
  \bibinfo {note} {This does not satisfy the absolute continuity assumption,
  but a more technical version of theorem~\ref {thm:Measurement} holds without
  it \cite {Leifer2013}.}\BibitemShut {Stop}%
\bibitem [{Note4()}]{Note4}%
  \BibitemOpen
  \bibinfo {note} {For present purposes, it would be sufficient to allow only
  50/50 mixtures.}\BibitemShut {Stop}%
\bibitem [{\citenamefont {Meyer}(1999)}]{Meyer1999}%
  \BibitemOpen
  \bibfield  {author} {\bibinfo {author} {\bibfnamefont {D.~A.}\ \bibnamefont
  {Meyer}},\ }\href@noop {} {\bibfield  {journal} {\bibinfo  {journal} {Phys.
  Rev. Lett.}\ }\textbf {\bibinfo {volume} {83}},\ \bibinfo {pages} {3751}
  (\bibinfo {year} {1999})}\BibitemShut {NoStop}%
\bibitem [{\citenamefont {Kent}(1999)}]{Kent1999}%
  \BibitemOpen
  \bibfield  {author} {\bibinfo {author} {\bibfnamefont {A.}~\bibnamefont
  {Kent}},\ }\href@noop {} {\bibfield  {journal} {\bibinfo  {journal} {Phys.
  Rev. Lett.}\ }\textbf {\bibinfo {volume} {83}},\ \bibinfo {pages} {3755}
  (\bibinfo {year} {1999})}\BibitemShut {NoStop}%
\bibitem [{\citenamefont {Clifton}\ and\ \citenamefont
  {Kent}(2001)}]{Clifton2001}%
  \BibitemOpen
  \bibfield  {author} {\bibinfo {author} {\bibfnamefont {R.~K.}\ \bibnamefont
  {Clifton}}\ and\ \bibinfo {author} {\bibfnamefont {A.}~\bibnamefont {Kent}},\
  }\href@noop {} {\bibfield  {journal} {\bibinfo  {journal} {Proc. R. Soc. Lon.
  A}\ }\textbf {\bibinfo {volume} {456}},\ \bibinfo {pages} {2101} (\bibinfo
  {year} {2001})}\BibitemShut {NoStop}%
\bibitem [{\citenamefont {Barrett}\ and\ \citenamefont
  {Kent}(2004)}]{Barrett2004}%
  \BibitemOpen
  \bibfield  {author} {\bibinfo {author} {\bibfnamefont {J.}~\bibnamefont
  {Barrett}}\ and\ \bibinfo {author} {\bibfnamefont {A.}~\bibnamefont {Kent}},\
  }\href@noop {} {\bibfield  {journal} {\bibinfo  {journal} {Stud. Hist. Phil.
  Mod. Phys.}\ }\textbf {\bibinfo {volume} {35}},\ \bibinfo {pages} {151}
  (\bibinfo {year} {2004})}\BibitemShut {NoStop}%
\bibitem [{\citenamefont {Hermens}(2011)}]{Hermens2011}%
  \BibitemOpen
  \bibfield  {author} {\bibinfo {author} {\bibfnamefont {R.}~\bibnamefont
  {Hermens}},\ }\href@noop {} {\bibfield  {journal} {\bibinfo  {journal} {Stud.
  Hist. Phil. Mod. Phys.}\ }\textbf {\bibinfo {volume} {42}},\ \bibinfo {pages}
  {214} (\bibinfo {year} {2011})}\BibitemShut {NoStop}%
\bibitem [{\citenamefont {Schr{\"o}dinger}(1936)}]{Schrodinger1936}%
  \BibitemOpen
  \bibfield  {author} {\bibinfo {author} {\bibfnamefont {E.}~\bibnamefont
  {Schr{\"o}dinger}},\ }\href@noop {} {\bibfield  {journal} {\bibinfo
  {journal} {Proc. Camb. Phil. Soc.}\ }\textbf {\bibinfo {volume} {32}},\
  \bibinfo {pages} {446} (\bibinfo {year} {1936})}\BibitemShut {NoStop}%
\bibitem [{\citenamefont {Hughston}\ \emph {et~al.}(1993)\citenamefont
  {Hughston}, \citenamefont {Jozsa},\ and\ \citenamefont
  {Wootters}}]{Hughston1993}%
  \BibitemOpen
  \bibfield  {author} {\bibinfo {author} {\bibfnamefont {L.~P.}\ \bibnamefont
  {Hughston}}, \bibinfo {author} {\bibfnamefont {R.}~\bibnamefont {Jozsa}}, \
  and\ \bibinfo {author} {\bibfnamefont {W.~K.}\ \bibnamefont {Wootters}},\
  }\href@noop {} {\bibfield  {journal} {\bibinfo  {journal} {Phys. Lett. A}\
  }\textbf {\bibinfo {volume} {183}},\ \bibinfo {pages} {14} (\bibinfo {year}
  {1993})}\BibitemShut {NoStop}%
\bibitem [{\citenamefont {Leifer}(2013)}]{Leifer2013}%
  \BibitemOpen
  \bibfield  {author} {\bibinfo {author} {\bibfnamefont {M.}~\bibnamefont
  {Leifer}},\ }\href@noop {} {\enquote {\bibinfo {title} {Bounds on the
  epistemic interpretation of quantum states from contextuality
  inequalities},}\ } (\bibinfo {year} {2013}),\ \bibinfo {note} {in
  preparation}\BibitemShut {NoStop}%
\end{thebibliography}%

\end{document}